\documentclass{article}

\usepackage{microtype}
\usepackage{graphicx}
\usepackage{subfigure}
\usepackage{booktabs} 

\usepackage{hyperref}
\usepackage{physics}


\usepackage{amsmath,amsfonts,bm}

\newcommand{\id}{\mathbf{1}}









\def\eqref#1{equation~\ref{#1}}









\def\1{\bm{1}}










\DeclareMathAlphabet{\mathsfit}{\encodingdefault}{\sfdefault}{m}{sl}
\SetMathAlphabet{\mathsfit}{bold}{\encodingdefault}{\sfdefault}{bx}{n}













\DeclareMathOperator*{\argmax}{arg\,max}



\usepackage[accepted]{icml2023}


\usepackage{amsmath}
\usepackage{amssymb}
\usepackage{mathtools}
\usepackage{amsthm}

\usepackage[capitalize,noabbrev]{cleveref}

\theoremstyle{plain}
\newtheorem{theorem}{Theorem}[section]
\newtheorem{prop}[theorem]{Proposition}

\theoremstyle{definition}

\theoremstyle{remark}

\newtheorem{example}{Example}[section]

\usepackage[textsize=tiny]{todonotes}

\icmltitlerunning{The END: An Equivariant Neural Decoder for Quantum Error Correction}

\begin{document}

\twocolumn[
\icmltitle{The END: An Equivariant Neural Decoder for Quantum Error Correction}



\icmlsetsymbol{equal}{*}

\begin{icmlauthorlist}
\icmlauthor{Evgenii Egorov}{equal,uva}
\icmlauthor{Roberto Bondesan}{equal,qcom}
\icmlauthor{Max Welling}{uva}
\end{icmlauthorlist}

\icmlaffiliation{uva}{University of Amsterdam}
\icmlaffiliation{qcom}{Qualcomm AI Research}

\icmlcorrespondingauthor{Evgenii Egorov}{egorov.evgenyy@ya.ru}
\icmlcorrespondingauthor{Roberto Bondesan}{r.bondesan@gmail.com}
\icmlkeywords{Equivariance, Toric code, Quantum Error Correction}

\vskip 0.3in
]



\printAffiliationsAndNotice{\icmlEqualContribution} 

\begin{abstract}
Quantum error correction is a critical component for scaling up quantum computing. Given a quantum code, an optimal decoder maps the measured code violations to the most likely error that occurred, but its cost scales exponentially with the system size.
Neural network decoders are an appealing solution since they can learn from data an efficient approximation to such a mapping and can automatically adapt to the noise distribution.
In this work, we introduce a data efficient neural decoder that exploits the symmetries of the problem. To this end, we characterize the symmetries of the optimal decoder for the toric code and propose a novel equivariant architecture that achieves state of the art reconstruction accuracy compared to previous neural decoders.
\end{abstract}

\section{Introduction}
Quantum computers have exponential advantage compared to classical computers for quantum physics simulations and for breaking certain cryptosystems. They can also provide speed ups for optimization and searching problems. 
However, these quantum advantages are guaranteed only for fault tolerant architectures and quantum error correction is a critical component to build a fault tolerant quantum computer.

The prototypical example of a quantum code is the toric code \cite{Kitaev_2003}, where qubits are placed on the edges of a torus and the logical qubits are associated with operations along the  non-contractible loops of the torus.
This model (or rather its variant with open boundaries) has been implemented in current hardware \cite{krinner2022realizing,PhysRevLett.129.030501,google_surface_code} and is a standard benchmark for developing new decoders.
Recently, alternative quantum LPDC codes have been explored which have better rate at the expense of complicated hardware implementations \cite{quantum_LPDC}.

The decoding problem aims at correcting the errors that occurred in a given time cycle.
Exact optimal decoding is computationally intractable \cite{iyer2013hardness}, and a standard approach in the literature is to devise handcrafted heuristics \cite{Dennis2002,Delfosse2021} that give a good tradeoff between time and accuracy.
The downside of these is however that they are tailored to a specific code or noise model.
Neural decoders have been proposed to overcome these limitations, by learning from data how to adapt to experimental setups.
Neural network decoders also benefit from quantization and dedicated hardware that allow them to meet the time requirements for decoders to be useful when deployed \cite{Overwater_2022}.
Several works therefore studied neural decoders for the toric code.
Pure neural solutions are however limited to small system sizes \cite{Krastanov_2017,wagner2020symmetries} or low accuracy \cite{Ni2020neuralnetwork}. Solutions that combine neural networks with classical heuristics can reach large systems but are limited in their accuracy by the underlying heuristics \cite{meinerz2021scalable}.

Incorporating the right inductive bias in the neural network architecture is an important design principle in machine learning, exemplified by convolutional neural networks, and their generalization, $G$-equivariant neural networks \cite{cohen2016,weiler2021general}
In this work, we show how to improve the performance of neural decoders by designing an equivariant neural network that approximates the optimal decoder for the toric code.
Our contributions are as follows:
\begin{itemize}
    \item We characterize the geometric symmetries of the optimal decoder for the toric code.
    \item 
    We propose an equivariant neural decoder architecture. The key innovation is a novel twisted version of the global average pooling over the symmetry group.
    \item 
    We benchmark a translation equivariant model against neural and non-neural decoders. 
    We show that our model achieves state of the art accuracy compared to previous neural decoders.
\end{itemize}

\section{Related work}

Popular handcrafted decoders for the toric code are the minimum weight perfect matching (MWPM) decoder \cite{Dennis2002} and the union find decoder \cite{Delfosse2021}.
These decoders however treat independently bit and phase flip errors, and they do not count correctly degenerate errors. For these reasons they are practically fast but have limited accuracy.
Not dealing with degenerate errors impacts also their equivariance as discussed in details in Appendix \ref{app:literature}.
Decoders based on tensor network contraction \cite{Bravyi2010,chubb2021general} achieve the highest threshold for the toric code. Their runtime however increases quickly with the bond dimension that controls the accuracy of the approximation and they are difficult to parallelize compared to neural networks. 
Also, contrary to ML methods, they cannot adapt automatically to different noise models.

Several papers have investigated neural networks for quantum error correction, however none of them studies the problem from an equivariance lens.
\cite{Krastanov_2017} uses a fully connected architecture; \cite{wagner2020symmetries} imposes translation invariance by to zero-centering the syndrome and uses a fully connected layer on top;
\cite{Ni2020neuralnetwork} uses a convolutional neural network which does not represent the right equivariance properties of the optimal decoder. Appendix \ref{app:literature} contains details of these architectures and the results obtained in these papers. 
\cite{meinerz2021scalable} obtains the largest system size and threshold among neural decoders by combining a convolutional neural network backbone with a union find decoder. In our work we show that our model, which does not rely on a handcrafted decoder, achieves higher accuracy.

From the perspective of equivariant architectures \cite{cohen2016,weiler2021general}, our work studies a generalized form of equivariance, where the output representation depends on the values of the inputs to the neural network. 
To the best of our knowledge, this type of symmetry properties for a neural network have not been considered before.

Finally, neural decoders for classical error correction were discussed as a form of generalized belief propagation in \cite{satorras2021neural}. However classical and quantum error correction are fundamentally different \cite{iyer2013hardness}, and these results do not directly translate to the quantum case.
See \cite{Liu2019} for an attempt which however does not achieve good accuracy for the toric code.

\section{The toric code}

In this section we review the necessary background on the toric code.
Recall that a qubit $\ket{\psi}$ is a superposition of $0$ and $1$ bits, which are denoted by $\ket{0}$
and $\ket{1}$:
$\ket{\psi} = \alpha\ket{0}+\beta\ket{1}$. 
A quantum error correction code aims at correcting two types of errors on qubits: bit flip errors $X$, and phase flip errors $Z$, which act as: 
$X(\alpha\ket{0}+\beta\ket{1})=\beta\ket{0}+\alpha\ket{1}$ and $Z(\alpha\ket{0}+\beta\ket{1})=\alpha\ket{0}-\beta\ket{1}$. 
We also recall that the space of $n$ qubits is that of superpositions of the $2^n$ possible bit strings of $n$ bits.
We denote by $E_i$ an error that acts only on the $i$-th qubit.  $E_i$ can take four values: $\id, X, Z, XZ$, corresponding to no-error, bit-flip, phase-flip, or combined phase and bit flip.
It turns out that the ability to correct these discrete set of errors is enough to correct general errors.
We refer the reader to \cite{NielsenChuang} for details on quantum error correction.

\begin{figure}[h]
     \centering
         \begin{tikzpicture}
        \draw[step=1, black] 
        (-0.9,-0.9) grid (4.9,4.9);
        \node[] at (-1.1,2) {$0$};
        \node[] at (-1.1,3) {$-1$};
        \node[] at (-1.1,1) {$1$};
        \node[] at (2,5.1) {$0$};
        \node[] at (3,5.1) {$1$};
        \node[] at (1,5.1) {$-1$};
        \draw[blue,thick] (2,4.9) -- (2,-0.9) node[below] {$C_1$};
        \node[blue] at (2,-1.5) {$\bar{Z}_1$};
        \draw[blue,thick] (-.9,2) -- (4.9,2) node[right] {$C_2(\bar{Z}_2)$};
        \draw[red,thick] (-0.9,1.5) -- (4.9,1.5) node[right] {$C_1^*(\bar{X}_1)$};
        \draw[red,thick] (2.5,4.9) -- (2.5,-0.9) node[below] {$C_2^*$};
        \node[red] at (2.5,-1.5) {$\bar{X}_2$};
        \draw[blue,thick] (0,1) -- (1,1);
        \draw[blue,thick] (1,0) -- (1,1) node[above left] {$a$};
        \draw[red,thick] (0.5,0.5) -- (1.5,0.5) -- (0.5,0.5) -- (0.5,-0.5) node[below] {$b$};
        \draw[red, thick] (0.5,4.5) 
        rectangle (-.5,3.5) node[below] {$d$};
        \draw[blue, thick] (1,4) 
        rectangle (0,3) node[below left] {$c$};
        \end{tikzpicture}
        \caption{Toric code square lattice with periodic boundary conditions. Blue paths are $Z$ errors, red paths on the dual lattice are $X$ errors. $c,d$ are $X$- and $Z$-stabilizers, while $C_i,C_i^*$ are logical operators corresponding to non-contracbtle loops around the torus.}
        \label{fig:toric_code}
\end{figure}
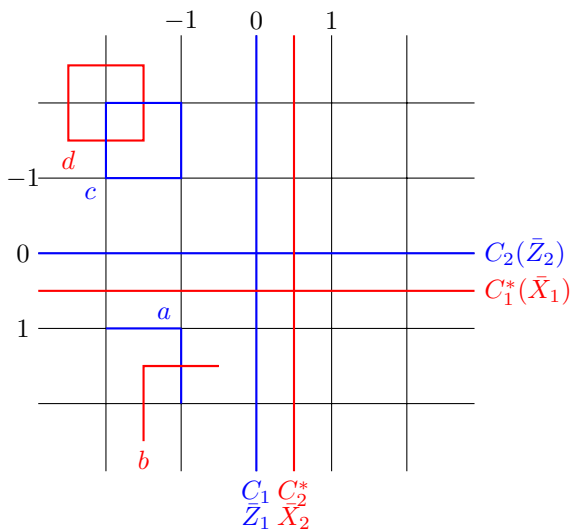

\subsection{Error paths}
The toric code protects against errors by encoding logical qubits in topological degrees of freedom related to the non-contractible cycles of a torus \cite{Kitaev_2003}. This is done as follows.
We start by placing physical qubits on the edges of a $L\times L$ square lattice embedded on a torus. Errors are then associated with paths that traverse the edges corresponding to the qubits affected by errors. 
For reasons that will become clear later, we associate $Z$ errors to paths on the lattice, and $X$ errors to paths on the dual lattice. This is illustrated in figure \ref{fig:toric_code}. Here $a$ represents a $Z$ error on the edges traversed by the paths, while $b$ represents a $X$ errors on the edges traversed by the dual path.

\subsection{Stabilizers and code space}\label{sec:stab_code}
We now consider certain combinations of bit and phase flips called $X$- and $Z$-stabilizers. For each plaquette of the lattice, we define a $Z$-stabilizers as the product of phase flips on the edges around the plaquette. Similarly, for each vertex of the lattice, a $X$-stabilizer is defined as the product of bit flips around a vertex. This is illustrated in figure \ref{fig:toric_code} by the cycles $c$ and $d$.
Note that $Z$-stabilizers are not all independent. In fact if we take the product of two neighbouring plaquettes, the error on the shared edge disappears, since flipping twice is identical to no flipping: $Z^2=\id$. If we take the product of $Z$-stabilizers over all the plaquettes, each edge is counted twice and so all errors disappear. This means that out of the $L^2$ $Z$-stabilizers, only $L^2-1$ are independent. Similarly, for $X$-stabilizers.
The toric code is then defined as the subspace of the $2L^2$ qubits that is preserved by the stabilizer operators. Concreately, if $\ket{\psi}$ is a vector in the $2^{2L^2}$-dimensional space of the physical qubits and $S_i$ a stabilizer, the code subspace is defined by $S_i\ket{\psi}=\ket{\psi}$ for all $i$. 
Note that $S_i^2=\id$ for each stabilizer, so $S_i$ has $\pm 1$ eigenvalues, and imposing the constraint $S_i\ket{\psi}=\ket{\psi}$ reduces the dimension of the space of the qubits by half. Since we have 
$2(L^2-1)$ independent stabilizers, the logical space has dimension $2^{2L^2}/2^{2(L^2-1)}=2^2$, which means that the toric code encodes two logical qubits for any $L$.
We thus see that
an error-free code vector lives in a 4 dimensional vector space. If errors are introduced, this code
vector will develop components in the complement of this code space. The goal of error correction
is to find the most likely projection back onto the code subspace.

\subsection{Logical operators}
We denote logical $X$ and $Z$ operators acting on the logical qubits by $\bar{X}_1,\bar{X}_2,\bar{Z}_1,\bar{Z}_2$. These operators are defined by the paths denoted by $C_1^*,C_2^*,C_1,C_2$ respectively in figure \ref{fig:toric_code}. 
To verify this statement, we need to check the commutation relations of these operators. First, we note that $X$ and $Z$ errors commute if they act on different qubits and anti-commute if they act on the same qubit: $XZ=-ZX$. Thus if we have a $Z$ error string $a$ and a $X$ error string $b$, they commute if they cross an even number of times (so that we have an even number of $-1$'s) and anti-commute if they cross an odd number of times (so that we have an odd number of $-1$'s). For example, the errors represented by the paths $a,b$ in figure \ref{fig:toric_code} anticommute.
We can then check that a $X$-stabilizer always commutes with a $Z$-stabilizer, since they always cross at either $0$ or $2$ edges. 
Similarly, we can check that logical operators commute with stabilizers for a similar reason, but are independent of them -- i.e.~they cannot be written as products of stabilizers -- and thus preserve the logical space but act non-trivially on it, as required to logical operators. Also, we can check that $\bar{X}_i$ anti-commutes with $\bar{Z}_i$ for $i=1,2$ since they cross on a single edge.
We introduce the notation $\omega(E,E')$ to denote whether two errors $E,E'$ anti-commute ($\omega(E,E')=1$) or commute $(\omega(E,E')=0)$.

\section{Symmetries of the toric code decoder}

\subsection{Maximum likelihood decoding}
\label{sec:max_likelihood_decoding}

Let us denote by $p(E)$ the probability for an error $E$ to occur, and assume that $p$ is known.
To correct an (unknown) error $E$ we first measure its syndrome $\sigma$. This is a binary vector of size $2L^2$, whose $i$-th entry is $1$ if 
$E$ anticommutes with the $i$-th stabilizer and zero otherwise.
The decoding problem is then to reconstruct the error given the syndrome.
It is rather easy to produce an error that is compatible with a syndrome. 
In fact, note that syndromes always come in pairs at the end of the error paths, as shown in figure \ref{fig:syndrome} by looking at the error paths $a$ or $c$.
Note that all operators on an error
path, except the ones on the endpoints, intersect twice or zero times a stabilizer, and thus commute with it.
Thus a simple decoding strategy is to return error paths that join syndromes in pairs. Any such paths will produce an error which has the correct syndrome. However, there are many possible errors compatible with a syndrome since both stabilizer and logical operators have trivial syndrome since they commute with all stabilizers. 
For example, the error $c,d$ or $a,b$ have the same syndromes in figure \ref{fig:syndrome}.
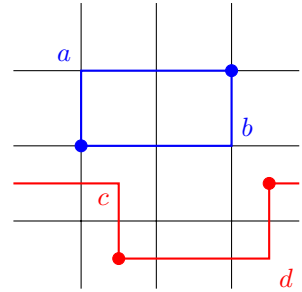
\begin{figure}[h]
         \centering
         \begin{tikzpicture}
        \draw[step=1, black] 
        (-0.9,-0.9) grid (2.9,2.9);
        \draw[red,fill=red] (2.5,0.5) circle (.5ex);
        \draw[red,fill=red] (0.5,-0.5) circle (.5ex);
        \draw[blue,fill=blue] (0,1) circle (.5ex);
        \draw[blue,fill=blue] (2,2) circle (.5ex);
        \draw[blue,thick] (0,1) -- (2,1) node[above right] {$b$} -- (2,2);
        \draw[blue,thick] (0,1) -- (0,2) node[above left] {$a$} -- (2,2);       
        \draw[red,thick] (0.5,-0.5) -- (2.5,-0.5) node[below right] {$d$} -- (2.5,0.5) -- (2.9,0.5);
        \draw[red,thick] (0.5,-0.5) -- (0.5,0.5) node[below left] {$c$} -- (-0.9,0.5);
        \end{tikzpicture}         
      \caption{
      $a$ and $b$ are two possible errors (phase flip paths) that give rise to the same syndrome, here represented by blue dots. Similarly, $c,d$ are two possible errors (bit flip paths) with the same syndrome, represented by red dots.}
     \label{fig:syndrome}
\end{figure}

To understand what constitutes an optimal reconstruction we argue as follows.
First, we note that stabilizer errors do not need to be corrected since by definition they act trivially on the logical qubits, and so two errors $E$ and $E'$ are equivalent if they differ by a stabilizer operator.
However, logical operators do change the logical state, and the optimal decoding strategy is then to choose the most likely logical operator.
The likelihood of a logical operator is to be computed by taking into account that any of the possible errors that are compatible with the syndrome and the logical operator content but differ by a stabilizer could have occurred.

Formally, let us define the vector $L=(\bar{X}_1,\bar{X}_2,\bar{Z}_1,\bar{Z}_2)$.
There are $16$ possible logical operators corresponding to the $4$ binary choices of acting or not with $L_a$, for $a=1,\dots,4$.
Similarly to the syndrome, we define the logical content of an error $E$ as the four-dimensional binary vector $\omega(E,L)$, $L=(\bar{X}_1,\bar{X}_2,\bar{Z}_1,\bar{Z}_2)$.
This allows us to detect whether any of the logical operators are part of $E$. 
(Note that one needs to swap the first two entries of $\gamma$ with the last two entries to reconstruct the logical operator content of $E$ due to the commutation relations. For example, $E=\bar{X}_1$, has $\omega(E,L)=(0,0,1,0)$.)
Then we consider the probability mass of all errors compatible with $\sigma$ and $\gamma$:
\begin{align}
    \label{eq:p_gamma_sigma}
    p(\gamma,\sigma)
    =
    \sum_{E \in {\cal P}}
    p(E) 
    \delta(\omega(E,S),\sigma)
    \delta(\omega(E,L),\gamma),
\end{align}
where ${\cal P}$ is the set of possible errors and $S$ is a vector with all $Z$ and $X$ stabilizers.
From the discussion above, the sum is effectively over all possible $2^{2L^2} $ stabilizer operators -- all the possible products of plaquette and vertex operators.
The most likely $\gamma$ will then allow us to obtain the optimal reconstruction, so maximum likelihood decoding amounts to solving the following optimization problem:
\begin{align}
&\max_{\gamma\in\{0,1\}^{4}} p(\gamma|\sigma)\,.
\end{align}

In the following we shall consider the  i.i.d.~noise called depolarizing noise, which is a standard choice for benchmarking quantum error correction codes \cite{NielsenChuang}:
\begin{align}
\label{eq:depolarizing_noise}
    p(E) = 
    \prod_{e\in {\cal E}}
    \pi(E_e)
    \,,\quad
    \pi
    (E)
    =
    \begin{cases}
    1-p & E = \id\\
    p/3 & E \in \{X,Z,XZ\}
    \end{cases}
    \,.
\end{align}
with ${\cal E}$ the set of edges of the lattice.
The number $p$ is in $[0,1]$ and we give the same probability $p/3$ to the events corresponding to the errors $X,Z,XZ$, while the case of no error has probability $1-3\times p/3 = 1-p$.
    
\subsection{Equivariance properties}
\label{sec:Equivariance properties}

The goal of this section is to derive the equivariance properties of the toric code maximum likelihood decoder.
To start, we define a symmetry of the code as the a transformation $g$ that preserve the group of stabilizers, namely that acts as a permutation of the stabilizers.
Since the logical subspace is defined by $S_i\ket{\psi}=\ket{\psi},~\forall i$, this definition is natural since the logical subspace does not change if we permute the stabilizers.
We call the set of all code symmetries the automorphism
group of the code.

If we denote with prime the transformed quantities, we have 
\begin{align}
    \label{eq:g_S}
    S_i' = S_{gi}\,.
\end{align}
For example, if $g$ is the horizontal translations of the lattice by one unit to the right, it acts on the $Z$ stabilizers $S^Z$'s as:
\begin{align}
    S^Z_p
    =
    \begin{tikzpicture}[baseline={([yshift=-.5ex]current bounding box.center)}]
    \draw[step=1, black] 
    (-0.2,-0.2) grid (2.2,1.2);
    \draw[blue, thick] (0,0) 
    rectangle (1,1);
    \node at (0.5,0.5) {$p$};
    \end{tikzpicture}
    \longrightarrow
    (S^Z)'_p
    =
    \begin{tikzpicture}[baseline={([yshift=-.5ex]current bounding box.center)}]
    \draw[step=1, black] 
    (-0.2,-0.2) grid (2.2,1.2);
    \draw[blue, thick] (1,0) 
    rectangle (2,1);
    \node at (1.5,0.5) {$gp$};
    \end{tikzpicture}
    \,,
\end{align}
and similarly for the $X$ stabilizers $S^X$. We call the set of all code symmetries the automorphism group of the code. 

\begin{figure*} 
  \begin{center}    \includegraphics[width=.9\textwidth]{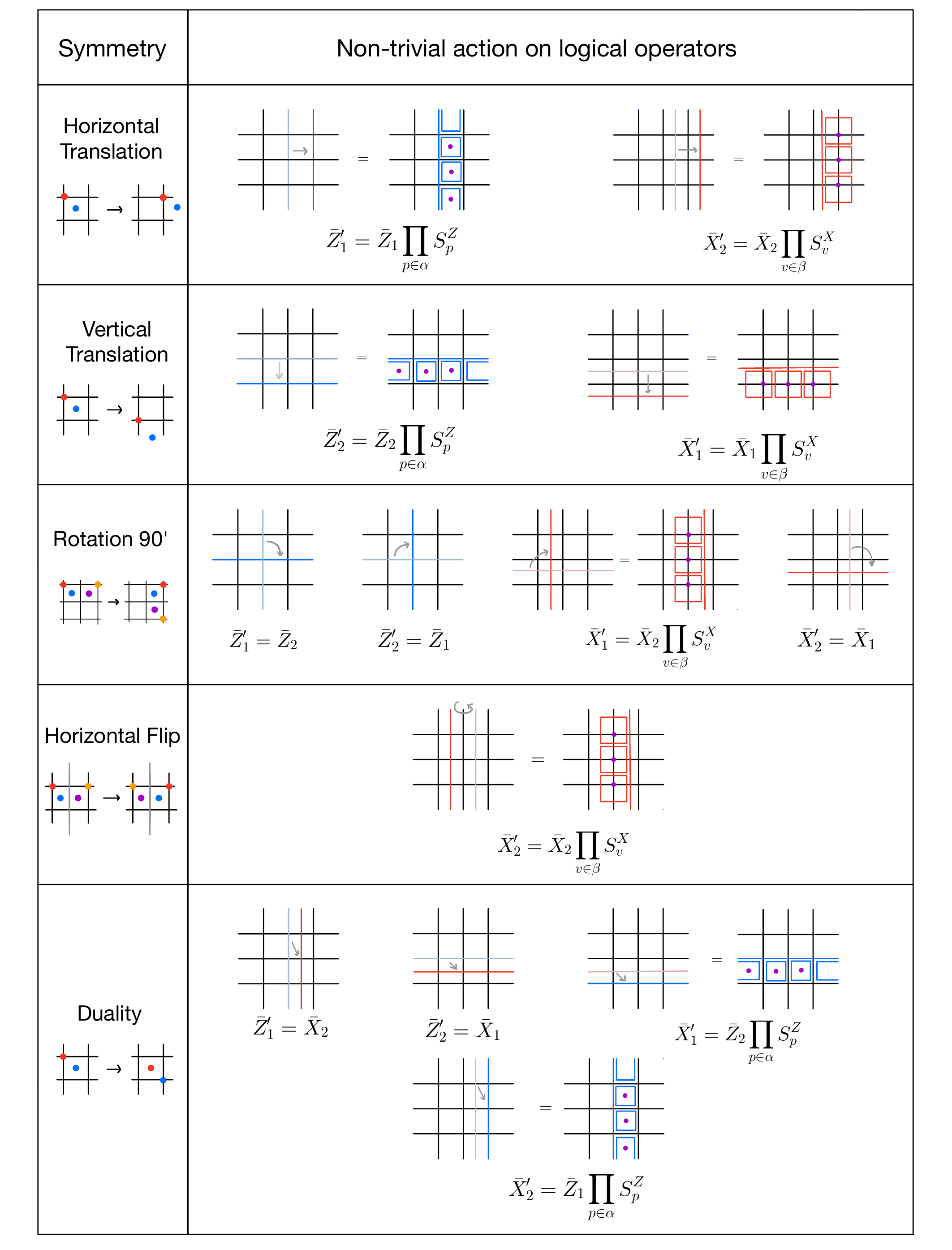}
  \end{center}
  \caption{Left column: the list of symmetries of the toric code decoder and their action on the vertices and plaquettes. Right column: non-trivial action of those symmetries on the logical operators. Purple dots indicate the paths $\alpha,\beta$ in the formulas below the pictures.
  We assume odd $L$ and 
  rotations are performed around the center vertex of the lattice, while horizontal flips are done around the vertical middle line.
  }
\label{fig:table_symmetries}
\end{figure*}

The automorphism group of the toric code is generated by the symmetries of the square lattice, namely horizontal and vertical translations, $90^{\circ}$ rotations and horizontal flips, together with the duality map which switches primal to dual lattice as well as $X$ with $Z$. 
The left column of figure \ref{fig:table_symmetries} shows the action of each of these symmetries on the vertices and plaquettes, defining the permutation of the associated stabilizers.
Logical operators also need to be permuted among themselves up to stabilizers:
\begin{align}
\label{eq:g_L}
L'_a
=
L_{ga}
\prod_{p\in \alpha^g_a}
S_p^Z
\prod_{v\in \beta^g_a}
S_v^X
\,,
\end{align}
where as above $L=(\bar{X}_1,\bar{X}_2,\bar{Z}_1,\bar{Z}_2)$, $ga$ is a permutation of the four elements, and 
$\alpha^g_a$ and $\beta^g_a$ are some $g$-dependent paths on the primal and dual lattice respectively.
The right column of figure \ref{fig:table_symmetries} shows the non-trivial action of the generators of the automorphism group of the toric code on the logical operators.
For example, focusing on the rotation by $90^{\circ}$ row, we see that $ga$ acts as the  permutation $(1234)\to (2143)$.

After discussing the symmetries of the toric code, we now consider the noise distribution. We call a transformation $g$ a symmetry of the noise model if it leaves the noise distribution invariant: $p(E') = p(E)$.
To present the equivariance result, we  find it notationally convenient to see the probability $p(\gamma|\sigma)$ as the $\sigma$-dependent tensor $\bm{p}(\sigma)$ with $4$ indices, $p(\sigma)_{\gamma_1,\gamma_2,\gamma_3,\gamma_4} = p(\gamma_1,\gamma_2,\gamma_3,\gamma_4|\sigma)$.
The permutation part $a\to ga$ for $a=1,2,3,4$ of \eqref{eq:g_L} acts on a tensor 
$t_{\gamma_1,\gamma_2,\gamma_3,\gamma_4}$ 
as the operator $P_g$: 
\begin{align}
    \label{eq:Pg}
    (P_g \bm{t})_{\gamma_1,\gamma_2,\gamma_3,\gamma_4}
    =
    t_{\gamma_{g1},\gamma_{g2},\gamma_{g3},\gamma_{g4}}\,.
\end{align}
With $\alpha_a^g$ and $\beta_a^g$ as in \eqref{eq:g_L} we define the following quantity:
\begin{align}
    \label{eq:Deltasigma}
    (\Delta_g \sigma)_a
    =     
    \sum_{p\in\alpha^g_a}
    \sigma^Z_p
    +    
    \sum_{v\in\beta^g_a}
    \sigma^X_v,    
\end{align}
with $\sigma^Z$ ($\sigma^X$) the syndrome of $S^Z$ ($S^X$).
With these definitions, we are ready to enunciate the equivariance properties of the maximum likelihood decoder.

\begin{theorem}[]\label{thm:symm_decoder}
If $g$ is a symmetry of the code and of the noise model, with action as in \eqref{eq:g_S} and \eqref{eq:g_L},
then the logical probability tensor is invariant under the following action
\begin{align}
    &(\rho_g\bm{p})(\sigma)
    =
    M_g(\sigma)
    \bm{p}(g^{-1}\cdot\sigma)
    \,,\\
    &(g\cdot
    \sigma)_i = \sigma_{gi}\,,\quad
    M_g(\sigma)
    =
    P_g^{-1}
    R^g_1(\sigma)R^g_2(\sigma)R^g_3(\sigma)R^g_4(\sigma)
    \,,
\end{align}
where $R^g_a(\sigma)$ acts as identity if $\Delta_g(g^{-1}\cdot\sigma)_a=0 \mod 2$ and as the flip $t_{\cdots \gamma_a \cdots} \mapsto 
t_{\cdots (1-\gamma_a) \cdots}
$ if $\Delta_g(g^{-1}\cdot\sigma)_a=1\mod 2$.
$P_g$ and $\Delta_g(\sigma)$ are defined in  \eqref{eq:Pg} and \eqref{eq:Deltasigma}.
\end{theorem}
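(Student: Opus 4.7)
The plan is to start from the definition~\eqref{eq:p_gamma_sigma} of the joint $p(\gamma,\sigma)$, perform the symmetry-induced change of summation variable $E\mapsto gEg^{-1}$, and read off how the two indicator constraints transform.

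The substitution is a bijection on the error set ${\cal P}$ since $g$ permutes single-qubit Paulis, and $p(gEg^{-1})=p(E)$ by the noise-symmetry assumption, so the weights in the sum are unchanged. Because conjugation by $g$ preserves Pauli commutators, $\omega(g^{-1}Fg,A)=\omega(F,gAg^{-1})$ for any Pauli $A$. Applying this with $A=S_i$ and using $gS_ig^{-1}=S_{gi}$ from~\eqref{eq:g_S}, the syndrome constraint $\omega(g^{-1}Fg,S_i)=\sigma_i$ becomes $\omega(F,S_j)=(g^{-1}\cdot\sigma)_j$, so the new summation variable $F$ must carry syndrome $g^{-1}\cdot\sigma$. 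Applying the same identity with $A=L_a$ and using~\eqref{eq:g_L}, one gets $\omega(F,gL_ag^{-1})=\omega(F,L_{ga})+\sum_{p\in\alpha^g_a}\omega(F,S^Z_p)+\sum_{v\in\beta^g_a}\omega(F,S^X_v)$ modulo~$2$. The first term is the $ga$-th coordinate of the logical content of $F$, while the remaining sums read syndrome bits of $F$ along $\alpha^g_a,\beta^g_a$; since $F$'s syndrome is $g^{-1}\cdot\sigma$, they sum to $(\Delta_g(g^{-1}\cdot\sigma))_a$ via~\eqref{eq:Deltasigma}. Setting this equal to $\gamma_a$ forces the logical content of $F$ to be the affine image of $\gamma$ over $\mathbb{F}_2$ given by the permutation $a\mapsto ga$ followed by a conditional XOR.

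Collecting, the substituted sum equals $p(\tilde\gamma,g^{-1}\cdot\sigma)$ with $\tilde\gamma$ this affine image. Marginalising over $\gamma$ yields $p(\sigma)=p(g^{-1}\cdot\sigma)$, so dividing gives $p(\gamma|\sigma)=p(\tilde\gamma \mid g^{-1}\cdot\sigma)$. Reading the permutation $\gamma\mapsto\gamma_{g\bullet}$ as $P_g^{-1}$ from~\eqref{eq:Pg} and each conditional flip as the $R^g_a(\sigma)$ of the statement completes the identification $p(\gamma|\sigma)=(M_g(\sigma)\bm{p}(g^{-1}\cdot\sigma))_\gamma$.

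The main obstacle is keeping the index conventions straight throughout: verifying that the permutation factor is $P_g^{-1}$ rather than $P_g$, that the argument of each $\Delta_g$ inside $R^g_a$ is $g^{-1}\cdot\sigma$ and not $\sigma$, and that the paths $\alpha^g_a,\beta^g_a$ appearing in~\eqref{eq:g_L} match those in~\eqref{eq:Deltasigma}. Once the conventions are fixed, everything else reduces to routine manipulation of $\delta$-functions and the $\omega$-pairing.
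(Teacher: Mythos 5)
Your proposal is correct and follows essentially the same route as the paper: a change of summation variable over errors using the noise invariance, the invariance of the commutation pairing $\omega$ under the symmetry, and the transformation laws \eqref{eq:g_S} and \eqref{eq:g_L} to convert the two $\delta$-constraints into the permuted-and-shifted pair $(\tilde\gamma, g^{-1}\cdot\sigma)$, followed by marginalising to handle the conditional and reading off $P_g^{-1}$ and the flips $R^g_a(\sigma)$. The only cosmetic difference is that you write the group action on errors as conjugation $E\mapsto gEg^{-1}$ where the paper uses an abstract permutation $\pi(g)E$ of the Pauli errors.
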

\begin{proof}
    See Appendix \ref{app:proof_symmetry}
\end{proof}

As a corollary of theorem \ref{thm:symm_decoder}, we see that the symmetries of the toric code discussed above (translations, rotations, mirrors and duality) are also symmetries of the maximum likelihood decoder when we have the depolarizing noise of \eqref{eq:depolarizing_noise}.

\begin{example}[]\label{ex:translations}
For concreteness, we here give the explicit formulas for the transformation $M_g(\sigma)$ in the case of  translations.
Referring then to figure \ref{fig:table_symmetries}, if $g$ is the horizontal translation by one unit to the right, then $P_g, R^g_1, R^g_4$ act as identity -- recall that $R^g_1, R^g_4$ are associated to $\bar{X}_1$, $\bar{Z}_2$ which do not change. 
Let us now introduce coordinates on the lattice such that $v = (0,0)$ is the middle vertex (assuming $L$ odd for simplicity), and label other vertices with numbers increasing to the right and bottom,  as in figure \ref{fig:toric_code}.
We also label the plaquette 
neighboring a vertex $(i,j)$ 
to its bottom-right as $(i+\frac{1}{2},j+\frac{1}{2})$. Then we have explicitly,
\begin{align}
    %
    \Delta_g(g^{-1}\cdot\sigma)
    =
    \Big(
    0,
    \sum_{i=0}^{L-1}\sigma^X_{i,0}
    ,
    \sum_{i=0}^{L-1}\sigma^Z_{i+\frac{1}{2},-\frac{1}{2}}
    ,
    0
    \Big)\,.
\end{align}
where the coordinates are understood modulo $L$.
Then $R^g_a$ 
acts as
 $t_{\cdots \gamma_a \cdots} \mapsto 
t_{\cdots (1-\gamma_a) \cdots}
$ if $\Delta_g(g^{-1}\cdot\sigma)_a = 1\mod 2$ and as identity 
$\Delta_g(g^{-1}\cdot\sigma)_a = 0\mod 2$.
Similarly, if $g$ is the vertical translation by one unit to the bottom, we have that $P_g$ is identity and the action of $R^g_a$ is read off from:
\begin{align}
    \Delta_g(g^{-1}\cdot\sigma)
    =
    \Big(
    \sum_{j=0}^{L-1}\sigma^X_{0,j}
    ,
    0,
    0,
    \sum_{j=0}^{L-1}\sigma^Z_{-\frac{1}{2},j+\frac{1}{2}}
    \Big)\,.    
\end{align}
Still referring to figure \ref{fig:table_symmetries}, it is also clear that translations by more than one unit will involve sums over syndromes associated to more than one row or column. For example, if $g$ is the vertical translation by two units to the bottom,
\begin{align}
    \Delta_g(g^{-1}\cdot\sigma)
    =
    \Big(
    \sum_{i=-1}^0
    \sum_{j=0}^{L-1}\sigma^X_{i,j}
    ,
    0,
    0,
    \sum_{i=-1}^0
    \sum_{j=0}^{L-1}\sigma^Z_{i-\frac{1}{2},j+\frac{1}{2}}
    \Big)\,.    
\end{align}
Translating by $L$ units to the bottom or to the right is the same as no translations. In our formalism this follows from the fact that there exists an error $E$ such that:
\begin{align}
\sum_{i,j=0}^{L-1}\sigma^X_{ij}
= 
\sum_{i,j=0}^{L-1}
\omega(E,S^X_{ij})
=
\omega\Big(E,\prod_{ij} S^X_{ij}\Big)
=
0\,.
\end{align}
The first equality is the definition of syndrome, the second uses the fact that $\omega(E,FG)=\omega(E,F)+\omega(E,G)\mod 2$, and the third uses that the product of $X$ stabilizers across all vertices is the identity, as remarked in section \ref{sec:stab_code}. The same argument applies to $\sigma^Z$ and $Z$ stabilizers.

\end{example}

\section{Machine learning approach}

\subsection{Data generation and loss function}
\label{sec:data_loss}

We now set up the task of learning the logical error probabilities $p(\gamma|\sigma)$ 
introduced in section \ref{sec:max_likelihood_decoding}.
The goal is to amortize the cost of maximum likelihood decoding via training a low complexity neural network.

We prepare data as follows. We are given a noise model $p(E)$ from which we can sample errors $E^{1},E^{2},\dots$
Concretely, we shall use below the depolarizing noise of \eqref{eq:depolarizing_noise}, but the arguments of this section hold for any choice of $p(E)$.
We then compute syndrome and logical components associated to each error $E$ as discussed above: 
\begin{align}
\sigma
=
\omega(E,S)
\,,\quad
\gamma
=
\omega(E,L)
\,.
\end{align}
The pairs $(\gamma,\sigma)$'s are distributed according to \eqref{eq:p_gamma_sigma}
and taken to be
inputs and outputs of a supervised learning task. We thus aim at learning a map $\hat{p}$ that maps a syndrome 
$\sigma\in \{0,1\}^{2L^2}$ to a probability distribution over $4$ binary random variables -- one for each $\gamma_a\in\{0,1\}$, $a\in \{1,2,3,4\}$ -- or alternatively over a categorical variable with $2^4 = 16$ values. 
We learn this map by minimizing the cross entropy loss:
\begin{align}
    \mathbb{E}_{\sigma\sim p(\sigma)}
    \mathbb{E}_{\gamma\sim p(\gamma|\sigma)}
    (-\log \hat{p}(\sigma)_{\gamma} )
    \,.
\end{align}
The minimizer of this loss function satisfies
$\hat{p}(\sigma)_{\gamma}=p(\gamma|\sigma)$.
Therefore, we can perform approximate maximum likelihood decoding by taking the maximum over the learnt probabilities.

\subsection{General theory of equivariant architectures}
\label{sec:General theory of equivariant architectures}

Before delving into the neural architecture, we discuss the symmetry action introduced in theorem \ref{thm:symm_decoder}.
Let us suppose that as in the 
theorem \ref{thm:symm_decoder}
we have a vector-valued function $\bm{f}(\sigma)$ 
and a symmetry action
$(\rho_g\bm{f})(\sigma)
=M_g(\sigma)\bm{f}(g^{-1}\cdot\sigma)
$.
For $\rho$ to be a well defined symmetry action (group homorphism) we need $\rho_g\rho_h = \rho_{gh}$ for any $g,h$ in the symmetry group $G$.
As shown in App.~\ref{sec:hom_rho}, this leads to the following relations:
\begin{align}
\label{eq:M relations}
M_{gh}(\sigma)
=
M_g(\sigma)M_h(g^{-1}\cdot \sigma)
\,.
\end{align}

The dependency of $M$ on $\sigma$, the input to the function on which $M$
 acts on, makes the problem more complicated than those typically considered in the machine learning literature on equivariance \cite{weiler2021general}.
In fact, typically one considers 
functions $\bm{f} : V_{\text{in}}\to V_{\text{out}}$, with $V_{\text{in}}, V_{\text{out}}$ input and output linear representations of $G$.
For example, for image classification, $V_{\text{in}}$ is typically the regular representation of the discrete translations group and $V_{\text{out}}$ is the trivial representation.

In our case instead, the output representation matrix $M_g(\sigma)$ depends on the input $\sigma$, and therefore we cannot immediately use the standard theory of equivariant neural networks \cite{weiler2021general}, which prescribes an alternation of layers with different linear representations of the group.
Instead, we solve the problem of parametrizing the invariant function $\bm{p}$ of theorem \ref{thm:symm_decoder} by projecting a general function onto the $G$-invariant subspace by symmetrizing over the group action.
In fact, we use a refinement of this idea that combines it with the standard theory of equivariant neural networks as follows.

\begin{prop}\label{prop:symmetrization}
Consider the group action $(\rho_g\bm{f})(\sigma) = M_g(\sigma)\bm{f}(g^{-1}\cdot \sigma)$ on a function $\bm{f} : \mathbb{R}^d\to \mathbb{R}^\ell$.
If $\bm{\phi} : \mathbb{R}^d \to  \mathbb{R}^{|G|}
\otimes \mathbb{R}^\ell$ is $G$-equivariant,
$\phi_{h,\gamma}(g^{-1}\cdot \sigma)
= \phi_{gh,\gamma}(\sigma)$,
then the following is invariant:
\begin{align}
    \label{eq:G average}
    \hat{\bm{f}}(\sigma)
    =
    \frac{1}{|G|}
    \sum_{h\in G}
    M_h(\sigma)
    \bm{\phi}_{h}(\sigma)
    \,.
\end{align}
\end{prop}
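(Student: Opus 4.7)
The plan is to prove invariance by direct computation: apply $\rho_g$ to $\hat{\bm{f}}$ and show the result equals $\hat{\bm{f}}$. There is no need for induction or clever decomposition; the two hypotheses (the cocycle identity \eqref{eq:M relations} for $M$ and the equivariance of $\bm{\phi}$) combine exactly in the right way to collapse the sum back to its original form.

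Concretely, I would first unfold the definition:
\begin{align}
(\rho_g \hat{\bm{f}})(\sigma)
=
M_g(\sigma)\, \hat{\bm{f}}(g^{-1}\cdot \sigma)
=
\frac{1}{|G|}\sum_{h\in G}
M_g(\sigma)\, M_h(g^{-1}\cdot \sigma)\,
\bm{\phi}_h(g^{-1}\cdot \sigma).
\end{align}
Next, I would invoke the two assumptions in parallel. By the cocycle relation derived in Appendix \ref{sec:hom_rho}, namely $M_{gh}(\sigma) = M_g(\sigma)\, M_h(g^{-1}\cdot\sigma)$, the product of $M$'s collapses to a single $M_{gh}(\sigma)$. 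By the equivariance hypothesis on $\bm{\phi}$, we have $\bm{\phi}_h(g^{-1}\cdot\sigma) = \bm{\phi}_{gh}(\sigma)$ (applied componentwise in the $\bm{R}^\ell$ factor). Substituting both identities yields
\begin{align}
(\rho_g \hat{\bm{f}})(\sigma)
=
\frac{1}{|G|}\sum_{h\in G}
M_{gh}(\sigma)\,\bm{\phi}_{gh}(\sigma).
\end{align}

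Finally, I would change summation variable from $h$ to $h' = gh$. Since left multiplication by $g$ is a bijection of $G$ onto itself, $h'$ ranges over all of $G$, and the sum becomes $\sum_{h'\in G} M_{h'}(\sigma)\,\bm{\phi}_{h'}(\sigma)$, which is exactly $|G|\,\hat{\bm{f}}(\sigma)$. Hence $(\rho_g\hat{\bm{f}})(\sigma) = \hat{\bm{f}}(\sigma)$ for every $g\in G$ and every $\sigma$, proving invariance.

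The only potentially subtle step is the bookkeeping in using the cocycle identity with the correct arguments: one must be careful that $M_g(\sigma)M_h(g^{-1}\cdot\sigma)$ reassembles as $M_{gh}(\sigma)$ rather than some other combination, which is exactly the form established in Appendix \ref{sec:hom_rho}. Beyond this, the argument is a routine twisted-averaging calculation, and no additional regularity or finite-dimensionality assumption is needed beyond the finiteness of $G$ which makes the average well defined.
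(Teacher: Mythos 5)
Your proof is correct and follows exactly the same route as the paper's: unfold the action, apply the cocycle identity $M_{gh}(\sigma)=M_g(\sigma)M_h(g^{-1}\cdot\sigma)$ together with the equivariance of $\bm{\phi}$, then reindex the sum over the group. The only difference is that you spell out the final change of variables $h\mapsto gh$, which the paper leaves implicit.
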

\begin{proof}
    Using \eqref{eq:M relations} and the equivariance hypothesis,
    \begin{align}
    (\rho_g\hat{\bm{f}})(\sigma)
    &=   
    M_g(\sigma)
    \frac{1}{|G|}
    \sum_{h\in G}
    M_h(g^{-1}\cdot \sigma)
    \bm{\phi}_{h}(g^{-1}\cdot \sigma)\\
    &=
    \frac{1}{|G|}
    \sum_{h\in G}
    M_{gh}(\sigma)
    \bm{\phi}_{gh}(\sigma)
    =
    \hat{\bm{f}}(\sigma)\,.
    \end{align}
\end{proof}

We note that the average over the group is the basic principle behind the popular global average pooling layer used at the head of convolutional neural networks.
The key innovation of our construction is to twist the sum by the matrix $M_g(\sigma)$ which ensures the right equivariance.

We summarize here the recipe to build an equivariant neural network for $\bm{p}(\sigma)$ in the case of the translation group.
In this case, $G$ is the product of two cyclic groups of length $L$, $G=\mathbb{Z}_L^{\times 2}$.
Then, elements of the group are indexed by coordinates of the lattice, $g=(i,j)$, and $\bm{\phi}$ is a standard translation-equivariant convolutional neural network
with input the syndrome of size $L\times L\times 2$ and output of size $L\times L\times 16$.
Appendix \ref{app:Implementation details for the translation group} contains details of the implementation of $M_g(\sigma)$ for the translation group, and shows that we can compute the function $\hat{\bm{f}}$ defined in proposition \ref{prop:symmetrization} efficiently in $O(L^2)$ time.

\section{Experiments}
\paragraph{Setup} We benchmark decoders for toric code in the presence of depolarising noise. We compare the performance of our decoder END to the most commonly used non-trainable decoder MWPM \citep{Dennis2002} and the highest-performing neural decoder UFML \citep{meinerz2021scalable}. 

The performance of decoders for lattice size $L$ and physical noise probability $p$ can be measured by the logical accuracy $p_{acc}$, which is the fraction of successfully decoded syndromes over the total number of syndromes. As probability of physical noise increases, logical accuracy decreases for a constant lattice size. Conversely, for a fixed noise level, a bigger lattice produces more accurate results. We can expect that logical accuracy can be expressed as function of lattice size and threshold probability of noise $p_{th}$:
\begin{equation}
    \begin{aligned}
    & p_{acc} = f(L\cdot(p - p_{th})).
    \end{aligned}
    \label{eq:pth}
\end{equation}
Hence, we should compare performance of decoder across several lattice sizes and physical noise probability values. We take the lowest noise probability to be threshold of MWPM decoder ($p_{th}^{\text{\tiny{MWPM}}}=0.155$), as we would like to compare with it. As the highest noise probability we take the highest number from UFML results $0.18$, which is also near theoretical upper bound on threshold $0.188$. We take other two points to be $0.166$ and $0.178$ to be near threshold of the UFML decoder ($p_{th}^{\text{\tiny{UFML}}}=0.167$).

Our decoder needs further clarification since it is trainable. We consider as a decoder a trained model on the particular lattice size and noise level. We then evaluate its performance under various physical noise probabilities and lattice sizes. While bigger lattice size leads to more robust code, for training the number of possible inputs to decoder increases exponentially with lattice size. We test than on lattices sizes 17, 19, 21 as a lattice big enough and practical for  code implementation with physical qubits. We take neural network with the same (up to 10\% difference) number of parameters as UFML.

\paragraph{Results} In Table \ref{tab:acc_decoders} we provide logical accuracy for decoders. We aim to compare END with UFML and MWPM. In the first row we provide upper bound for UFML results, as in original paper they are presented as a figure. For all lattice sizes in range $(7; 63)$ we provide an upper bound for logical accuracy for each noise level. END decoder performers better for each physical noise probability on smaller lattices. Since implementing physical qubits is the challenge, this is significant: one can more robust logical qubit with fewer ($17^2$ instead of $63^2$) by using END as decoder. 

In other blocks of table we provide logical accuracy of the MWPM and END decoder over lattices 17, 19, 21 (each block is sorted in ascending order). All END decoders were trained with noise probability $0.17$ and denoted in the table by $(\text{L}, \text{ch})$: training lattice size and the number of channels in the first block of CNN body. 

For all size of models END decoder outperforms MWPM and UFML decoders. We can speculate about the reasons. Comparing with MWPM both UFML and END decoders can learn correlation between X and Z errors, so this can be a reason of outperforming. Comparing UFML and END decoder we can note that END solves the problem globally: mapping the whole syndrome to the logical state. In contrast, UFML solves problem locally, i.e. given a patch of syndrome provide estimation of the noise realisation per qubit in patch. While this approach scalable in nature it provides worse performance. Following \citep{chubb2021general} we estimated the threshold $p_{th}^{END}$ by evaluation decoder for several lattices on regular grid $[0.145; 0.18]$ of $21$ noise probability values and fit cubic polynomial $f$ over $(p_{acc};L\cdot(p-p_{th})$ pairs from \ref{eq:pth}. Estimated threshold of the END is $0.17$, which is better than UFML ($0.167$) and MWPM ($0.155$), see App. \ref{app:threshold} for plot.

\paragraph{Ablation }In Table \ref{tab:ablation} we show ablation studies where we change the twisted global average pooling of section \ref{sec:General theory of equivariant architectures} for a simple global average pooling or fully connected layer. We found that performance degrades considerably in those cases. Since CNN with average pooling (AP) cannot learn for lattice size 7 and noise level 0.155, it is impossible it will be successful for larger lattices and higher levels of noise. CNN with fully connected head (FC) for lattice size 7 shown performance better than MWPM (and still worse than END), hence we tried to train it on larger lattice and a bigger noise level, where it failed.
\begin{table}[ht]
\defcitealias{meinerz2021scalable}{UFML}
\defcitealias{higgott2021pymatching}{MWPM}
\centering
\scalebox{0.8}{
\begin{tabular}{llrrrr}
\toprule
    Decoder &  $L$ & $p:$ 0.155 &  0.166 &  0.178 &  0.18 \\
\midrule
\citetalias{meinerz2021scalable} & $(7;63)$ & $(0.5;0.6)$ & $<0.6$ & $<0.45$  & $< 0.2$\\
\midrule
\midrule
    \citetalias{higgott2021pymatching}  & 17 &   0.55 &   0.43 &   0.31 &  0.29 \\
 \small{END(17, 32)} & 17 &   0.77 &   0.66 &   0.52 &  0.49 \\
 \small{END(17, 64)} & 17 &   0.82 &   0.72 &   0.57 &  0.55 \\
\small{END(19, 128)} & 17 &   0.82 &   0.72 &   0.58 &  0.55 \\
\bfseries\small{END(17, 128)} & 17 &   \bfseries0.85 &   \bfseries0.75 &   \bfseries0.61 &  \bfseries0.59 \\
\midrule
    \citetalias{higgott2021pymatching}  & 19 &   0.55 &   0.42 &   0.29 &  0.28 \\
 \small{END(17, 32)} & 19 &   0.75 &   0.63 &   0.47 &  0.45 \\
 \small{END(17, 64)} & 19 &   0.82 &   0.70 &   0.54 &  0.52 \\
\small{END(19, 128)} & 19 &   0.84 &   0.72 &   0.57 &  0.55 \\
\bfseries\small{END(17, 128)} & 19 &   \bfseries0.85 &   \bfseries0.74 &   \bfseries0.59 &  \bfseries0.57 \\
\midrule
    \citetalias{higgott2021pymatching} & 21 &   0.55 &   0.41 &   0.28 &  0.26 \\
 \small{END(17, 32)} & 21 &   0.70 &   0.56 &   0.40 &  0.38 \\
 \small{END(17, 64)} & 21 &   0.77 &   0.63 &   0.46 &  0.44 \\
\small{END(17, 128)} & 21 &   \bfseries0.83 &   0.70 &   0.53 &  0.51 \\
\bfseries\small{END(19, 128)} & 21 &   \bfseries0.83 &   \bfseries0.71 &   \bfseries0.55 &  \bfseries0.53 \\
\bottomrule
\end{tabular}}
\caption{\small{Logical accuracy (larger is better) of decoders over depolarising noise for $L: 17, 19, 21$ and noise levels around thresholds of competitive decoders. All \textit{END} decoders were trained with noise probability $0.17$ and $(\text{L}, \text{ch})$ denotes training lattice size and the number of channels in the first block of CNN body. All St.d. $\leq 0.002$, sample size $10^6$. Here UFML is the method of \cite{meinerz2021scalable}  and MPWM the minimum weight perfect matching decoder  \cite{Dennis2002}.
}}
\label{tab:acc_decoders}
\end{table}
\begin{table}[ht]
\scalebox{0.9}{
\begin{tabular}{llrrr}
\toprule
    Ablation &  $L$ & $p:$ 0.155 &  0.166  \\
\midrule
    (7,32,AP) & 7 & 0.13(0.01) & - \\
    (7,32,FC) & 7 & 0.62(0.05) & 0.51(0.05) \\
\midrule
    (15,64,FC) & 15 & - & 0.21(0.02)  \\
    (17,64,FC) & 17 & - & 0.06(0.02)  \\
    (19,64,FC) & 17 & - & 0.1(0.03)  \\
\bottomrule
\end{tabular}}
\caption{Ablation study for \textit{END} decoder. We used same body architecture and training procedure, but the Fully Connected (FC) or Average pooling (AP) projection from feature to the space of logits instead of the equivariant pooling introduced in section \ref{sec:General theory of equivariant architectures}.}
\label{tab:ablation}
\end{table}
The following paragraphs report more technical details about the experiments.
\paragraph{Architecture} We adapt the wide-resnet (WR) architecture \citep{zagoruyko2016wide}: each convolution is defined to have periodic boundaries. WR consists of 3 blocks, where the depth of each block was 3 and fixed across all experiments. We vary the number of channels in the blocks: $(\text{ch},64,64)$, $\text{ch}\in\{32,64,128\}$. Inside each block we used the GeLU \citep{hendrycks2016gaussian} activation function and standard batch-norm. As initialization we used kaiming for leaky ReLU. 
\paragraph{Sampling noise channel} For performance tests of neural decoders we used standard NumPy random generator. During training we used Quasi-Monte Carlo generator based on Sobolev Sequence. This does not provide any gain in terms of performance overall, but we found it to stabilise training. Both for training and performance evaluation batches were generated on the fly.
\paragraph{Training hyperparameters} We used AdamW optimiser \citep{loshchilovdecoupled} for all experiments. In order to avoid manual tuning of schedule and learning rate, we used "1cycle" approach \citep{smith2019super}. Typical maximal learning rate was $0.01$ for batch $512$ and $0.03$ for batch $2048$.
For the ablation studies we also tried reduce on plateau and cosine annealing, however this doesn't produce consistent effects for lattice size bigger than 7.

\section{Conclusions and outlook}
In conclusion, in this work we have shown for the first time how to build neural decoders that respect the symmetries of the optimal decoder for the toric code. We have also benchmarked our novel translation equivariant architecture against other approaches in the literature, finding that our method achieves state of the art reconstruction accuracy compared to previous neural decoders.

Future work will explore implementing other symmetries, scaling up to larger lattices, and deploy the model to interface with a quantum computer.
Our methods can also be extended to other quantum LDPC codes -- where the set of vertices, edges and faces of the square lattice is replaced by more general chain complexes -- and we envision applying equivariant neural decoders to these other codes as well.


\bibliography{refs}
\bibliographystyle{icml2023}

\newpage
\appendix
\onecolumn

\section{Equivariance property of toric code decoders in the literature}
\label{app:literature}
\subsection{Classical decoders}
We first discuss classical, i.e.~non-neural, decoders.

The maximum weight perfect matching decoder (MWPM) is the standard decoder for the toric code \cite{Dennis2002}.
It treats $X$ and $Z$ syndromes independently and returns the minimum (Hamming) weight error consistent with the syndrome, a problem which can be solved using the Blossom algorithm in $O(n^3)$ time, but on average, it takes $O(n)$ time \cite{Fowler2012}. This decoder is popular because of its simplicity, but it has two main drawbacks: first, it treats $X$ and $Z$ error independently; second, it does not account for the degeneracy of errors due to the trivial action of the stabilizer \cite{Poulin2010}.
Here we also show that it does not respect the equivariance properties of the maximum likelihood decoder under translations, see also \cite{wagner2020symmetries} where the authors point out that MWPM is not translation invariant.
We note that the root cause for this failure is the ambiguity of minimum weight decoding for a string of syndromes, which is translation invariant, while the error string returned by the MWPM decoder is not, since it is obtained by breaking the ambiguity by an arbitrary choice, which is not modified after a translation.
Note that degeneracy also can lead to a breaking of the symmetry in the maximum likelihood decoder. In fact, if two logical classes $\gamma,\gamma'$ are such that $p(\gamma|\sigma) = p(\gamma'|\sigma)$ are this value is the largest logical probability, then it does not matter which one we return. This ambiguity can also lead to a non-translation equivariant result of the maximum likelihood decoder.


Now we discuss the union find decoder \cite{Delfosse2021}. Like MWPM, union find also treats $X$ and $Z$ independently -- which leads to suboptimal decisions -- and is a based on a two-stage process: first, during the syndrom validation step errors are mapped onto erasure errors, namely losses that occur for example when a qubit is reinitialized into a maximally mixed state; then, one applies the erasure decoder. The latter simply grows a spanning forest to cover the erased edges starting from the leaves, and flips qubits if it encounters a vertex with a syndrome. The syndrome validation step creates a list of all odd clusters, namely clusters with an odd number of non-trivial syndromes. This is done by growing odd clusters until two meet so that their parity will be even.
We note that the syndrome validation step respects the symmetries of the square lattice as does the erasure decoder. The union find decoder $d$ thus returns a recovery $E$ for a syndrome $\sigma$ so that $d(T \sigma) = Td(\sigma)$ for a translation $T$, leading to the right equivariance expected from a maximum likelihood decoder.
This decoder is also very fast, practically $O(L^2)$, but the heuristics used leads to a suboptimal performance w.r.t.~the MWPM decoder. 

The tensor network decoder achieves state of the art results for the threshold probability of the toric code \cite{Bravyi2014,chubb2021general}. It does so by approximating directly the intractable sum over the stabilizer group that is involved in computing the logical class probabilities. The runtime is $O(n\log n + n\chi^3)$ where $n=L^2$ and $\chi$ is the so-called bond dimension, which is the number of singular values kept when doing an approximate contraction of tensors.
Near the threshold we expect this to grow with the system size, but in practice modest values (e.g.~$\chi=6$ for the surface code in \cite{Bravyi2014} with $L=25$) give good results over a range of noise probabilities.
The symmetries of the decoder will depend on the approximate contraction procedure. Those used in \cite{Bravyi2014,chubb2021general} create a one dimensional matrix product state along a sweep line on a planar embedding of the Tanner graph of the code. 
This procedure breaks the translational invariance of the decoder due to the finite $\chi$, and in these works it was applied only to the surface code, namely the code with boundaries. 
We believe that an equivariant contraction procedure might lead to an even more efficient tensor network decoder.

\subsection{Neural decoders}
\cite{Krastanov_2017} introduces the machine learning problem of predicting the error given a syndrome with a neural network for the toric code. The architecture used is a fully connected network that does not model any symmetries of the toric code. It obtains threshold $16.4$ and studies lattices up to $L=9$.

\cite{wagner2020symmetries} explicitly investigates the role of symmetries for neural decoders. It uses a high level decoder architecture, where an input syndrome $\sigma$ is first passed to a low level decoder which is not learnable and returns a guess for the error, $f(\sigma)$, which will correspond to a given logical class. The syndrome is also passed to a neural decoder that as in our setting predicts the logical probability. This is then combined with the underlying decoder to make a prediction for the error.
In formulas, called the neural prediction $\hat{p}(\gamma|\sigma)$, the logical probabilities returned by the high level decoder is
\begin{align}
\hat{\gamma}(\sigma)
=
\argmax_\gamma
\hat{p}(\gamma|\sigma)
+
\omega(L,f(\sigma))\,.
\end{align}
To take into account symmetries, the authors modify this setup by introduce a preprocessing step to deal with translations and mirror symmetries. For translations for example they define equivalence classes of syndromes related by translations. For each class they define an algorithm that centers the syndrome $\sigma$ to pick a representative, say $[\sigma]$ and pass that as input to both the low level decoder and neural network. By construction, denoted by $T$ the translation operator, one has $[\sigma T]=[\sigma]$. Then the output of the low level decoder is obtained by undoing the translation on the output of the low level decoder on $[\sigma]$. 
Let us call this modified low level decoder $\tilde{f}(\sigma)$ and the translation applied to produce the representative $T_\sigma: [\sigma]= \sigma T_\sigma$.
Then $\tilde{f}(\sigma) = f([\sigma]) T_\sigma^{-1}$ and this means that $\tilde{f}(\sigma)$ is translationally equivariant by construction: $\tilde{f}(\sigma T)=
f([\sigma]) T_{\sigma T}^{-1}
=
f([\sigma]) (T^{-1} T_{\sigma})^{-1}
=
\tilde{f}(\sigma)T$.
The neural network has input $[\sigma]$ and the modified high level decoder used in this paper is:
\begin{align}
\hat{\gamma}(\sigma)
=
\argmax_\gamma
\hat{p}(\gamma|[\sigma])
+
\omega(L,\tilde{f}(\sigma))\,.
\end{align}
Note that we get the correct behavior under translations, see \ref{prop:symm_decoder}:
(Note that $T^{-1}$ appears w.r.t.~\eqref{eq:gamma_prime} since we are considering the equation written as $p(\gamma|\sigma')=p(\gamma''|\sigma)$)
\begin{align}
\hat{\gamma}(\sigma T)
&=
\argmax_\gamma
\hat{p}(\gamma|[\sigma T])
+
\omega(L,\tilde{f}(\sigma) T)
\\
&=
\argmax_\gamma
\hat{p}(\gamma|[\sigma])
+
\omega(LT^{-1},\tilde{f}(\sigma))
\\
&=
\argmax_\gamma
\hat{p}(\gamma|[\sigma])
+
\omega(L,\tilde{f}(\sigma))
+
\rho_{\text{stab}}(T^{-1})\omega(H,\tilde{f}(\sigma))\\
&=
\hat{\gamma}(\sigma T)
+
\rho_{\text{stab}}(T^{-1})\sigma
\,.
\end{align}
While the pipeline proposed in this paper is manifestly equivariant under translations, it requires additional computational cost to preprocess the data, and uses a fully connected network. Further, the authors could only show improvements w.~r.~t.~MWPM decoder for $L=3,5,7$, when using as underlying decoder MWPM itself, which adds additional runtime.

\cite{Ni2020neuralnetwork} implements a neural decoder for large distance toric codes $L=16,64$. The decoder is only tested for bit flip noise, where it performs on par, or lower, to MWPM.
Large distance is achieved by using convolutional layers to downscale the lattice, in a similar fashion to a renormalization group decoder.
The architecture is a a stack of CNN blocks each downsampling by half the lattice size, till the system has size $2\times 2$. Downsampling is done by a convolutional layer with filter size $[2,2]$ and stride $[2,2]$.
The output marginal probabilities for logical classes are then produced by a dense layer on the outputs the CNN blocks: $p(\gamma_0|\sigma),p(\gamma_1|\sigma)$ where $\gamma_i\in\mathbb{F}_2$ corresponds to acting with $\bar{X}_i$ or not.
Note that the marginal probabilities will have a transformation law inherited by that of the joint, namely for translations $\rho_{\text{logi}}(T)=\id$, we have $p(\gamma_i+\rho_{\text{stab}}(T)_{i:}\sigma|\sigma T) = 
p(\gamma_i|\sigma)
$.
The authors did not discuss whether the architecture they propose has this symmetry property.
We conjecture that the architecture in this paper does not have the right symmetry under translations. In fact, we expect that a CNN -- the architecture proposed is a CNN apart from the periodic boundary conditions in the convolutions -- can approximate only a translation invariant function, in our case $p(\gamma_i|\sigma T) = 
p(\gamma_i|\sigma)$, and a function with the equivariance properties required by the actual logical probabilities.

\cite{meinerz2021scalable} uses a CNN backbone which processes patches of the lattice to produce the probability that the central qubit of the patch has an error, and then adds on top a union find decoder to deal with correlations beyond the size of the patch that the neural network sees. Using a CNN (and assuming periodic padding), the system is equivariant under translations, and so is the union find decoder, so the whole procedure amounts to a decoder $d(T \sigma) = Td(\sigma)$ for a translation $T$, leading to the right equivariance expected from a maximum likelihood decoder.
While relying on the union find decoder for long range correlations allows one to scale to large lattices (up to $L=255$), it also limits its accuracy, which leads to a threshold probability of $0.167$.

\section{Proof of theorem \ref{thm:symm_decoder}}
\label{app:proof_symmetry}

To prove theorem \ref{thm:symm_decoder}, we shall first establish the following proposition which shows the transformation of the components of the maximum likelihood decoder.

\begin{prop}[]\label{prop:symm_decoder}
If $g$ is a symmetry of the code and noise model, then for all $\gamma\in\mathbb{F}_2^{2k}, \sigma\in \mathbb{F}_2^{n-k}$, we have $p(\gamma| \sigma)=p(\gamma'| \sigma')$, with
\begin{align}
    \label{eq:gamma_prime}
    \sigma' &= 
    g^{-1}\cdot  \sigma\\
    \label{eq:gamma prime app}
    \gamma' 
    &= 
    \rho^{-1}_{\text{logi}}(g)
    (
    \gamma + 
    \Delta_g(\sigma')
    \mod 2
    )     
    \,,
\end{align}
where $\rho_{\text{logi}}$ is the permutation representation of the logical operators in \eqref{eq:g_L} and $\Delta_g$ is defined in \eqref{eq:Deltasigma}.
\end{prop}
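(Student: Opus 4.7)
\textbf{Proof plan for Proposition \ref{prop:symm_decoder}.} The plan is to start from the definition \eqref{eq:p_gamma_sigma} of the joint $p(\gamma,\sigma)$ and perform a change of variables in the error sum using the action of $g$ on Pauli errors, call it $E\mapsto \alpha_g(E)$. Since $g$ is by assumption a symmetry of the code and of the noise, we have $\alpha_g(S_i)=S_{gi}$ (by \eqref{eq:g_S}), $\alpha_g(L_a)=L_{ga}\prod_{p\in\alpha_a^g}S_p^Z\prod_{v\in\beta_a^g}S_v^X$ (by \eqref{eq:g_L}), and $p(\alpha_g(E))=p(E)$. Also, $\alpha_g$ preserves the symplectic form: $\omega(\alpha_g(E),\alpha_g(F))=\omega(E,F)$. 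Replacing $E$ by $\alpha_{g^{-1}}(E)$ in the sum \eqref{eq:p_gamma_sigma} and using invariance of $p$, the whole transformation gets pushed onto the stabilizer and logical arguments of $\omega$ in the two indicators.

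The next step is to rewrite the two constraints in the indicators. For the syndrome, $\omega(\alpha_{g^{-1}}(E),S_i)=\omega(E,\alpha_g(S_i))=\omega(E,S_{gi})$, so the constraint $\omega(\alpha_{g^{-1}}(E),S)=\sigma$ becomes $\omega(E,S)_{gi}=\sigma_i$ for all $i$, i.e.\ $\omega(E,S)=g^{-1}\cdot\sigma=\sigma'$. For the logical content, $\omega(\alpha_{g^{-1}}(E),L_a)=\omega(E,\alpha_g(L_a))$, and expanding the product using $\omega(E,FG)=\omega(E,F)+\omega(E,G)\bmod 2$ gives
\begin{equation*}
\omega(E,\alpha_g(L_a))=\omega(E,L_{ga})+\sum_{p\in\alpha_a^g}\omega(E,S_p^Z)+\sum_{v\in\beta_a^g}\omega(E,S_v^X)=\omega(E,L_{ga})+\Delta_g(\omega(E,S))_a,
\end{equation*}
where the last equality uses the definition \eqref{eq:Deltasigma} of $\Delta_g$. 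Under the syndrome constraint just derived, this equals $\omega(E,L_{ga})+\Delta_g(\sigma')_a$, so the second indicator turns into $\omega(E,L_{ga})=\gamma_a+\Delta_g(\sigma')_a\bmod 2$ for each $a$.

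The final step is to recognize this reshuffled logical constraint as the $g$-transformed logical content $\gamma'$ from \eqref{eq:gamma prime app}. The permutation $a\mapsto ga$ acting on the index of the logical entries is exactly the permutation representation $\rho_{\text{logi}}(g)$, so saying $\omega(E,L_{ga})=\gamma_a+\Delta_g(\sigma')_a$ for all $a$ is equivalent to $\omega(E,L)=\rho_{\text{logi}}^{-1}(g)(\gamma+\Delta_g(\sigma'))=\gamma'$. Hence after the change of variables the joint sum becomes exactly $p(\gamma',\sigma')$, proving $p(\gamma,\sigma)=p(\gamma',\sigma')$. Marginalizing over $\gamma$ (a bijective relabeling of the four-dimensional vector) yields $p(\sigma)=p(\sigma')$, and dividing gives $p(\gamma|\sigma)=p(\gamma'|\sigma')$.

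The main obstacle will be purely bookkeeping: keeping straight the conventions for the left versus right action of $g$ on indices, and making sure the $\Delta_g$ correction term produced by the stabilizer prefactor in $\alpha_g(L_a)$ is evaluated on the correct argument ($\sigma'$ rather than $\sigma$), which is what forces the composition $\gamma+\Delta_g(\sigma')$ inside $\rho_{\text{logi}}^{-1}(g)$ rather than outside. Theorem \ref{thm:symm_decoder} then follows by rewriting $p(\gamma|\sigma)=p(\gamma'|\sigma')$ in tensor form, with the flips $R_a^g$ implementing the addition of $\Delta_g(\sigma')_a\bmod 2$ on index $a$ and $P_g$ implementing $\rho_{\text{logi}}^{-1}(g)$.
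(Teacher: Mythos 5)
Your proposal is correct and follows essentially the same route as the paper's proof in Appendix~\ref{app:proof_symmetry}: a change of variables $E\mapsto \pi(g)E$ in the sum \eqref{eq:p_gamma_sigma}, invariance of $p(E)$ and of the symplectic form, the transformation laws \eqref{eq:g_S} and \eqref{eq:g_L} pushed into the two indicators, and the observation that the syndrome constraint lets one evaluate $\Delta_g$ on $\sigma'$. The closing step, marginalizing over $\gamma$ to get $p(\sigma)=p(\sigma')$ and dividing, is also exactly what the paper does.
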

\begin{proof}
If we denote by $\pi(g)$ the action of a symmetry on the error $E$, since $\omega(E,FG) = \omega(E,F)+\omega(E,G) \mod 2$, and $p(E)=p(\pi(g)E)$ by assumption, we have $\omega(\pi(g)E,\pi(g)F)=\omega(E,F)$, so:
\begin{align*}
    p(\gamma, \sigma)
    &=
    \sum_{E\in {\cal P}}
    p(E) 
    \, \delta( \omega(E, S), \sigma)
    \delta(\omega(E, L), \gamma)\\
    &=
    \sum_{E\in {\cal P}}
    p(\pi(g) E) 
    \, 
    \delta( \omega(\pi(g) E, \pi(g)S), \sigma)
    \delta(\omega(\pi(g) E,\pi(g)L ), \gamma)\\
    &=
    \sum_{E\in {\cal P}}
    p(E) 
    \, 
    \delta( \omega(E, \pi(g)S), \sigma)
    \delta(\omega(E,\pi(g)L ), \gamma)
    \\
     &=
    \sum_{E\in {\cal P}}
    p(E) 
    \, 
    \delta( \omega(E, S), 
    g^{-1}\cdot \sigma)
    \delta(
    \rho_{\text{logi}}(g)
    \omega(E,L)
    +
    \Delta_g(\omega(E,S)), \gamma)\\
    &=p(\gamma',\sigma')
    \,.
\end{align*}
In the third to last equality we relabeled $\pi(g)E$ with $E$ since $\pi(g)$ is an invertible transformation on the set of Pauli operators and thus acts as a permutation of the Pauli errors.
In the second to last equality we used the transformation laws of $S$ and $L$, \eqref{eq:g_S} and \eqref{eq:g_L}.

The probability $p(\gamma|\sigma)$ has the same symmetry since $p(\gamma|\sigma)=p(\gamma,\sigma)/p(\sigma)$ and the denominator $p(\sigma)$ is invariant: $p(\sigma)
=\sum_{\gamma}p(\gamma,\sigma)
=\sum_{\gamma}p(\gamma',\sigma')
=\sum_{\gamma'}p(\gamma',\sigma')
=
p(\sigma')$.
\end{proof}

The theorem \ref{thm:symm_decoder}  follows by noting that the map $p(\gamma,\sigma)\mapsto p(\gamma',\sigma')$ can be written as the operator 
$P^{-1}_gR_1R_2R_3R_4$ defined in theorem \ref{thm:symm_decoder}
acting on the tensor $\bm{p}(\sigma)$.
Indeed the map $p(\gamma_1,\gamma_2,\gamma_3,\gamma_4) \mapsto 
p(\rho_{\text{logi}}(g)\gamma)=
p(\gamma_{g1},\gamma_{g2},\gamma_{g3},\gamma_{g4})$ can be written as $P_g$ acting on the tensor 
$\bm{p}$, with $P_g$ explicitly in Dirac notation:
\begin{align}
    P_g 
    =
    \sum_{\gamma} \ket{\gamma_1,\gamma_2,\gamma_3,\gamma_4}\bra{\gamma_{g1},\gamma_{g2},\gamma_{g3},\gamma_{g4}}
    =
    \sum_{\gamma} \ket{\gamma_{g^{-1}1},
    \gamma_{g^{-1}2},
    \gamma_{g^{-1}3},\gamma_{g^{-1}4}}\bra{\gamma_{1},\gamma_{2},\gamma_{3},\gamma_{4}}
    \,.
\end{align}
This is the same object introduced in \eqref{eq:Pg}. It is a representation of the symmetric group of $4$ elements: $P_g P_h = P_{gh}$.
The map $p(\gamma) \mapsto 
p(\gamma + \Delta_g(\sigma) \mod 2)$
can be written as the following operator acting on tensor $\bm{p}$:
\begin{align}
    \bigotimes_{a=1}^4
    \left[\delta_{\Delta_g(\sigma)_a,0}\id_2 +
    \delta_{\Delta_g(\sigma)_a,1}X
    \right]\,,
\end{align}
with $X$ the Pauli $X$.
This proves the form of the operator 
$P_g^{-1}R_1\cdots R_4$
introduced in \ref{thm:symm_decoder}.

\section{Group homorphism property of the representation $\rho$}
\label{sec:hom_rho}

If we denote $\bm{f}'(\sigma) = (\rho_h\bm{f})(\sigma) = M_h(\sigma)\bm{f}(h^{-1}\cdot \sigma)$, the condition $\rho_g\rho_h = \rho_{gh}$, means that we have:
\begin{align}
(\rho_g\rho_h\bm{f})(\sigma)
=
M_g(\sigma)\bm{f}'(g^{-1}\cdot \sigma)
=
M_g(\sigma)M_h(g^{-1}\cdot \sigma)
\bm{f}(h^{-1}g^{-1}\cdot \sigma)\,,
\end{align}
which needs to equal $M_{gh}(\sigma)\bm{f}((gh)^{-1}\cdot \sigma)$, that is
\begin{align}
M_{gh}(\sigma)
=
M_g(\sigma)M_h(g^{-1}\cdot \sigma)
\,.
\end{align}
This is a necessary condition for $\rho$ to be a well defined action.

\section{Implementation details for the translation group}
\label{app:Implementation details for the translation group}

We shall now discuss some details of the construction of proposition \ref{prop:symmetrization} for the translation group.
We index elements of the translation group $\mathbb{Z}_L^{\times 2}$ as $g=(i,j)$ indicating a translation to the right by $i$ and to the bottom by $j$. $\bm{\phi}$ is then a standard translation-equivariant convolutional neural network: 
\begin{align}
\phi((-i,-j)\cdot \sigma)_{k,l,\gamma}= 
\phi(\sigma)_{k+i,l+j,\gamma}
\,.    
\end{align}
From \eqref{eq:M relations} with $g=(i,0)$, $h=(0,j)$, we have 
\begin{align}
M_{(i,j)}(\sigma) 
&= M_{(i,0)}(\sigma) 
M_{(0,j)}((-i,0)\cdot\sigma)\\
&=
M_{(i,0)}(\sigma) 
M_{(0,j)}(\sigma)
\end{align}
where the second equality follows from the fact that $M_{(0,j)}(\sigma)$ depends on $\sigma$ only through sums along rows which are invariant under horizontal translations.
We can then consider the horizontal and vertical translations separately.
Setting $g=(i-1,0)$ and $h=(1,0)$ in
\eqref{eq:M relations} we get a recursion relation 
\begin{align}
M_{(i,0)}(\sigma) = 
M_{(i-1,0)}(\sigma)
M_{(1,0)}((-i+1,0)\cdot \sigma)
\,.    
\end{align}
We discussed explicitly $M_{(1,0)}(\sigma)$ in example \ref{ex:translations}. $M_{(1,0)}((i,0)\cdot \sigma)$ involves the sum 
of the syndrome over the $i+1$-th column of vertices or plaquettes -- when starting counting from the middle, as in figure \ref{fig:toric_code} -- and can be precomputed for all $i$ by summing along the columns of the matrices $\sigma^X$ and $\sigma^Z$.
Therefore we can compute $M_{(i,0)}(\sigma)$ from $M_{(i-1,0)}(\sigma)$ in $O(1)$ time.
A similar procedure allows us to compute 
$M_{(0,j)}(\sigma)$ so that
the summation in \eqref{eq:G average} can be computed efficiently in $O(L^2)$.

Since our experiments focus on the translation symmetry, we refrain from discussing here details of the implementation of the other symmetries of section \ref{sec:Equivariance properties}.

\section{Threshold plot}
\label{app:threshold}
\begin{figure}[h]
    \centering
    \includegraphics[scale=0.5]{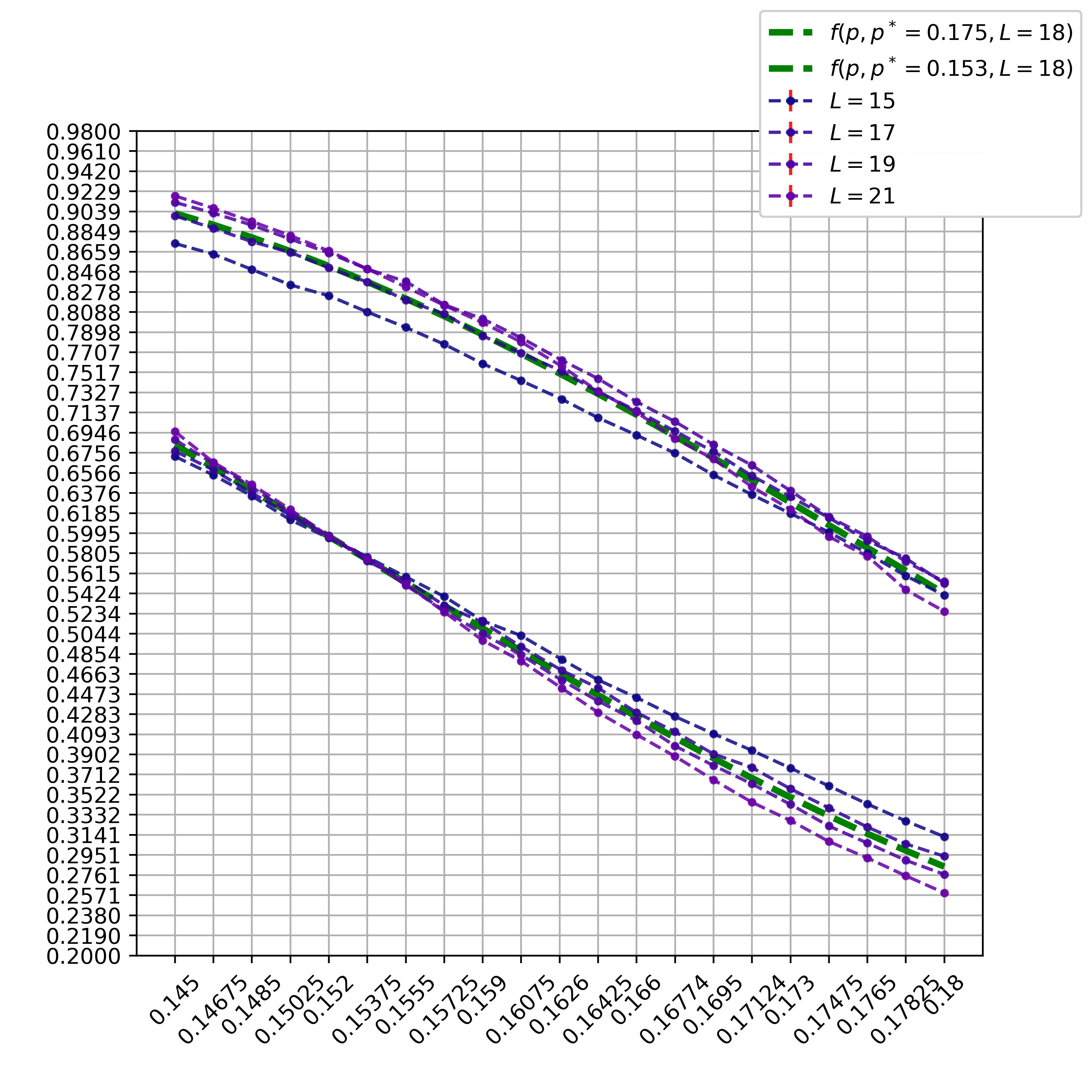}
    \label{fig:pth}
    \caption{Logical accuracy of MWPM and \textit{END} decoder. \textit{END} decoder has threshold $\approx 0.17$, MWPM $\approx 0.155$}
    \label{fig:threshold}
\end{figure}


\end{document}